\newtheorem{theorem}{Theorem}[section]
\newtheorem{corollary}{Corollary}[theorem]
\newtheorem{proposition}{Proposition}[section]
\newcommand{\eqn}[1]{(\ref{#1})}
\newcommand{\C}{\mathbb{C}}
\newcommand{\Tr}{\mathrm{Tr}}
\newcommand{\be}{\begin{equation}}
\newcommand{\ee}{\end{equation}}
\def\beqa{\begin{eqnarray}}
\def\eeqa{\end{eqnarray}}
\def\bean{\begin{eqnarray*}}
\def\eean{\end{eqnarray*}}
\def\nn{\nonumber}
\begin{document}
\author{{\large M. A. Man'ko,$^1$ V. I. Man'ko,$^{1,2}$ G.~Marmo,$^{3,4}$ F.
Ventriglia,$^{3,4}$ and P. Vitale~$^{3,4}$}\\
$^1$ Lebedev Physical Institute, Leninskii Prospect 53,
Moscow 119991, Russia \\
$^2$ Moscow Institute of Physics and
Technology (State University), Institutskii per. 9, Dolgoprudny, Moscow Region
141700, Russia\\
$^3$ Dipartimento di Fisica ``E. Pancini,'' Universit\`a di Napoli Federico II,
Complesso Universitario di Monte S. Angelo Edificio 6, via Cintia, 80126
Napoli, Italy\\
$^4$ INFN-Sezione di Napoli, Complesso Universitario di Monte S. Angelo, Edificio 6,
via Cintia, 80126 Napoli, Italy}

\title {\large Dichotomic probability representation of quantum states}

\let\origmaketitle\maketitle
\def\maketitle{
  \begingroup
  \def\uppercasenonmath##1{} 
  \let\MakeUppercase\relax 
  \origmaketitle
  \endgroup
}
\email{mankoma@lebedev.ru}\email{manko@lebedev.ru}\email{  marmo@na.infn.it}\email{
ventriglia@na.infn.it}\email{ patrizia.vitale@unina.it}
\maketitle
\begin{abstract}
We present systematic proofs of statements about probability
representations of qudit density states in terms of standard probability distributions of dichotomic random variables.  New
relations and new entropic-information inequalities are derived. The examples of
3- and 4- level states  are explicitly worked out. 
\end{abstract}

\vspace{2pc}

\noindent{\it Keywords}: Quantum tomography, dichotomic probability
distributions, qudit density matrices, entropic inequalities.




\section*{Introduction}

Quantum states are associated with rays of a Hilbert space,  or, in general, 
with density operators acting on
it~\cite{Landau27,vonNeumann27,Dirac-book,Schroed1926,Wigner32,Husimi40,Kano65,Glauber63,Sudar63}.
In either case, quantum states do  replace  classical probability distributions but they do not 
represent  fair probabilities on any  sample space. However, they  can be  associated with true probability distributions   in the tomographic picture of quantum mechanics. There, 
states are identified with tomographic-probability
distributions (quantum tomograms) of homodyne quadrature components
for continuous variables or spin tomographic-probability
distributions for discrete spin-projection variables, see the discussion e.g. 
in~\cite{TombesiPLA,DodPLA,OlgaJETP,MarmoPS150,Vitale16,scully}.

Optical tomograms for systems with continuous variables were measured in
 experiments~\cite{Raymer,Raymer-Lvovski}, where the Wigner function of
photon states was reconstructed, by means of the relation between  tomograms
and Wigner functions through the Radon transform~\cite{Radon1917} found
in~\cite{BerBer,VogRis}. 

In~\cite{TombesiPLA},  symplectic-tomography for 
 photon quadrature was introduced, as an alternative to   optical tomography. Also in~\cite{TombesiPLA},
tomographic-probability distributions were suggested as a
primary notion of photon states, providing a  ``classical-like'' description of quantum
states in a statistical mechanics approach. The tomographic characterization  of spin states was introduced 
in~\cite{DodPLA,OlgaJETP}; see also the review~\cite{MarmoPS150}.

Recently
the possibility to parameterise  density matrices of qudit states
($d$-level atom states, spin states) by sets of dichotomic
probability distributions has been proposed and developed
in~\cite{Chernega-JRLR,Chernega-JPCS,RitaEntropy,PS-Milestones,Julio-Entropy,MA-Entropy}.
The aim of this work is to provide general statements about the
probability description of qudit states by means of dichotomic
probabilities and  to prove new properties of nonnegative
trace-one Hermitian matrices. 

The approach is quite simple, out of the $d$-dimensional Hilbert space  $\C^d$, we identify $d(d-1)$  complex planes, in each one of them we consider a group of unitary matrices isomorphic with $U(2)$, and matrices  corresponding to the associated $\mathfrak{u}(2)$ Lie algebra. Clearly this family of $\mathfrak{u}(2)$  Lie algebras is sufficient to describe the whole $\mathfrak{u}(d)$ algebra, in some redundant way. In some sense we could say that a generic quantum system may be studied by means of properly chosen qubit systems (this should not be confused with the description of the total system by means of  qubit subsystems.  Indeed, in the latter situation,  the associated composite system would be a tensor product of qubits, which is not the case with our generic Hilbert space). We then use the tomographic description of each $\mathfrak{u}(2)$ subalgebra: because of dimensionality, the tomograms will represent dichotomic probability distributions.  We shall prove that an arbitrary
$d$$\times$$d$ density matrix can be parameterized by $(d^2-1)$
probability distributions of dichotomic random variables. Having elaborated these tools, we use them to decompose a given $d$-state, with $d=nm$,  into $n$-states and $m$-states. Using
this result, we obtain new entropic-information inequalities among
matrix elements of an arbitrary matrix $\rho$, associated with a state. In addition, we
obtain new inequalities for characteristic polynomials associated
with such matrices. We illustrate some of these claims in the case
of qubit and qutrit density matrices.  This proposal will be worked out in details in the coming sections. We shall use a pedagogical style and spell out all computations.

The paper is organized as follows.

In section~\ref{qspb}, we exhibit the dichotomic probability
representation of qubit and qutrit states. In section~\ref{qNit}, we
generalize the dichotomic probability representation to qudit
states. In section~\ref{redumats}, we illustrate a reduction
procedure to construct two types of  new density states from a
$d=nm$-dimensional starting one, with dimension $n$ and $m$,
respectively. In section~\ref{infoentro}, we obtain some new
entropic inequalities for the  matrix elements of density states, as
spin~off of the approach developed. 
Finally,
we give our conclusions in section~\ref{concl}.


\section{Quantum states and probability vectors}\label{qspb}

Quantum states and probability vectors of dichotomic observables can be
considered within the probability representation of quantum mechanics, where
the states are usually considered to replace fair probabilities. 

For simplicity, we restrict our
considerations to finite-dimensional Hilbert spaces. In what follows we shall take a pedagogical attitude and spell out all details so that general statements are always illustrated by the example.

Let ${\mathcal H}$ be the Hilbert space associated with our quantum system. If
$\mathcal{H}$ and $ |  e_1\rangle, |  e_2\rangle, \ldots, |
e_d\rangle$ is an orthonormal basis, we can associate a matrix with
$ | \psi\rangle$, by first defining a rank-one projector and then using a specific basis of orthonormal vectors:
\begin{equation}\label{1}
\|\psi_{jk}^{(e)}\|=\Big{\|}\frac{\langle e_j | \psi\rangle\langle
\psi | e_k\rangle}{\langle\psi | \psi\rangle}\Big{\|}.
\end{equation}
The diagonal elements $\{\psi_{jj}^{(e)}\}$ represent a probability
distribution with $n$ components; $\psi_{jj}^{(e)}\geq 0$ and
$\sum_j\psi_{jj}^{(e)}=1$. It is a probability distribution on the set
$\{1,2,\ldots,d\}$, we call it a probability vector.

If we select a different orthonormal basis, say, $ |
f_1\rangle, |  f_2\rangle, \ldots, |  f_n\rangle$, we associate
with $ | \psi\rangle$ a different matrix
\begin{equation}\label{2}
\|\psi_{jk}^{(f)}\|=\Big{\|}\frac{\langle f_j | \psi\rangle\langle
\psi | f_k \rangle}{\langle\psi | \psi\rangle}\Big{\|}.
\end{equation}
Again, the diagonal elements provide a new probability vector, a new
probability distribution on the set $\{1,2,\ldots,d\}$.

Thus, with a given vector, depending on the chosen basis, we associate a
family of probability distributions on the set $\{1,2,\ldots,d\}$. As a matter
of fact, with a given $ | \psi\rangle$ but using different orthonormal bases,
one can obtain all possible probability distributions. As a matter of fact,
given a probability vector, say $(p_1,p_2,\ldots,p_d)$, it is possible to find
a whole family of states corresponding to the same probability distribution;
indeed, in the given basis we would have
$ | \psi\rangle\langle \psi|=\sum_{j,k}\sqrt{p_j p_k}e^{i(\varphi_j-\varphi_k)} |  e_j\rangle\langle e_k|$, where
$\varphi_j$ is completely arbitrary, by using different bases we would get
different states $ | \psi_f\rangle\langle \psi_f |=\sum_{j,k}\sqrt{p_j p_k}e^{i(\varphi_j-\varphi_k)} |  f_j\rangle\langle f_k|$.

We notice that states we build are rank-one Hermitian operators of trace
one.

The identification of the probability distribution with the diagonal of the
matrix associated with a given vector $ | \psi\rangle$ shows also that the
association disregards all ``off-diagonal'' elements, i.e., not only rank-one
operators but also states of  higher rank, as long as the diagonal is
unchanged, will give rise to the same probability distribution.

The probability distributions associated with every vector by means of
different bases are called tomograms; indeed, tomography consists of
reconstructing the state when a sufficient set (a ``quorum'') of tomograms is
provided. Such tomograms of spin states were studied, e.g.,
in~\cite{Weigert1,Weigert2}. The spin tomography was discussed
in~\cite{TombesiPLA,DodPLA}; see also the review~\cite{Paini}.

Having stressed that alternative states, both pure and mixed, may give rise to
the same probability distribution, it is quite surprising and highly not
trivial that by giving a sufficient set of probability distributions thought
of as related to the same state, we are able to reconstruct uniquely the
state, be it pure or mixed.

Let us identify the mathematical ingredients of previous construction. We have
first the association of a rank-one projector with every vector, say,
$ | \psi\rangle\to\displaystyle{\frac{ | \psi\rangle\langle
\psi | }{\langle\psi | \psi\rangle}}=\rho_\psi$. Next, the selection of a
basis in ${\mathcal H}$ provides a resolution of the identity, say,
${\mathbf{1}}=\sum_j |  e_j\rangle\langle e_j | $ and, moreover, allows for
the construction of a basis of Hermitian matrices, specifically,
$$
 |  e_j\rangle\langle e_j | ,\quad  |  e_j\rangle\langle
e_k |  + |  e_k\rangle\langle e_j | ,\quad i( |
e_j\rangle\langle e_k |  - |  e_k\rangle\langle e_j | ),
\quad
 |  e_k\rangle\langle e_k | 
$$
when j and k  go from 1 to n.
However, when we restrict to  given pair of j and k, 
 in each ``$(j,k)$-plane'' we build a basis of a  $u(2)$ Lie
algebra. Thus we obtain a different "placement" of an abstract u(2) Lie algebra, for any choice of a $(j,k)$ plane. The unitary transformation taking from one basis to a different one changes the placement of the u(2) Lie algebra.

The Weyl basis $\left\{ |  e_j\rangle\langle e_k | \right\}$
allows for the construction of the matrix associated with any vector
$ | \psi\rangle$; we have
$\psi_{jk}=\mbox{Tr}\left(\displaystyle{\frac{ | \psi\rangle\langle\psi | }{\langle\psi | \psi\rangle}\,
 |  e_j\rangle\langle e_k | }\right)$.

The association of a probability distribution with $\rho_\psi$ only
uses diagonal elements $\left\{ |  e_j\rangle\langle
e_j | \right\}$; thus, we need a sufficient number of independent
bases so that by means of the diagonal elements associated with the
various bases, say, $\big[\left\{ |  e_j\rangle\langle
e_j | \right\},\left\{ |  f_j\rangle\langle
f_j, | \right\},\ldots,\left\{ |  k_j\rangle\langle
k_j, | \right\},\ldots\big]$, we may generate a basis of rank-one
operators.

In order to fully reconstruct a state we need $d^2-1= (d-1)(d+1)$ parameters. On  using resolutions of the identity
$$\sum_j |  f_j\rangle\langle
f_j | =\sum_j |  e_j\rangle\langle e_j | =\cdots=\sum_j |
k_j\rangle\langle k_j | ={\bf 1},$$ the independent diagonal elements
associated with every basis are $(d-1)$ in number; therefore  we need
$(d+1)$ of such independent families.

Since orthonormal bases may be constructed by means of normalized
eigenvectors of a generic observable $A$ with simple  eigenvalues, to obtain full information on the quantum state, we can
measure $(d+1)$ independent families of $(d-1)$ pairwise commuting
observables, which are independent. From each family, it would be enough to measure  just one observable which has a non degenerate spectrum.

\noindent {\bf{Remark:}} By using the expectation value functions,
$e_A(\psi)=\displaystyle{\frac{\langle\psi |  A | \psi\rangle}
{\langle\psi | \psi\rangle}}$, we may define the independence to be
the functional independence of the expectation value functions
associated with every observable of the pairwise commuting family.

To nail down these general considerations, we consider two examples,
namely, a qubit and a qutrit.

\subsection {The qubit case}

Here, $d=2$ and  ${\mathcal H}={\C}^2$. We have to measure $d+1=3$ independent families of $d-1=1$ commuting observables, which we choose to be one of  the Pauli matrices
$\sigma_1$, $\sigma_2$, and $\sigma_3$, which are  Hermitian operators in
the space of 2$\times$2 matrices. Clearly, each Pauli matrix, together with the identity matrix $\sigma_0$ will define a
basis of $\mathfrak{u}_2$.

For the observable associated to $\sigma_3=\left(\begin{array}{cc}1&0\\
0&-1\end{array}\right)$ we have
\begin{equation}
 | f^+_3\rangle=\left(\begin{array}{c}1\\0\end{array}\right),\;\;\;\;
 | f^-_3\rangle=\left(\begin{array}{c}0\\1\end{array}\right).
\end{equation}
 The two vectors $ |
f^+_3\rangle=\left(\begin{array}{cc}1\\0\end{array}\right)$ and $ |
f^-_3\rangle=\left(\begin{array}{cc}0\\1\end{array}\right)$, being
orthonormal eigen-vectors for $\sigma_3$,
 determine rank-one projectors
$\hat {\Pi}_3^+, \hat {\Pi}_3^-$ 
\begin{equation}\label{A1}
\hat {\Pi}_3^+= |  f^+_3\rangle\langle {f^+_3} | =\left(\begin{array}{cc}1&0\\
0&0\end{array}\right),\quad
\hat {\Pi}_3^-= |  f^-_3\rangle\langle f^-_3 | =\left(\begin{array}{cc}0&0\\
0&1\end{array}\right)
\end{equation}
besides  the matrices
\begin{equation}\label{A2}
\hat{\Pi}^\pm_3= |  {f^+}_3\rangle\langle {f^-}_3 | =\left(\begin{array}{cc}0&1\\
0&0\end{array}\right),\quad
\hat{\Pi}^\mp_3= |  {f^-}_3\rangle\langle {f^+}_3 | =\left(\begin{array}{cc}0&0\\
1&0\end{array}\right).
\end{equation}
Altogether they  form a basis in the linear space of
2$\times$2-matrices.

For the observable associated to $\sigma_1=\left(\begin{array}{cc}0&1\\
1&0\end{array}\right)$ the orthonormal eigenvectors are
\begin{equation}\label{A3}
 |  f^+_1\rangle =\frac{1}{\sqrt 2}\left(\begin{array}{c}1\\
1\end{array}\right),\quad |  f^-_1\rangle=\frac{1}{\sqrt 2}\left(\begin{array}{c}1\\
-1\end{array}\right),
\end{equation}
They  determine the rank-one projectors $\hat {\Pi}_1^+, \hat {\Pi}_1^-$
\begin{equation}\label{A4}
 \hat {\Pi}_1^+= |  f^+_1\rangle\langle f^+_1 | =\left(\begin{array}{cc}1/2&1/2\\
1/2&1/2\end{array}\right),\quad
\hat {\Pi}_1^-= |  f^-_1\rangle\langle f^-_1 | =\left(\begin{array}{cc}1/2&-1/2\\
-1/2&1/2\end{array}\right)
\end{equation}
and the matrices
\begin{equation}\label{A5}
   \hat{\Pi}^\pm_{1}= |  {f^+}_1\rangle\langle {f^-}_1 | =\left(\begin{array}{cc}1/2&-1/2\\
1/2&-1/2\end{array}\right),\quad
\hat{\Pi}^\mp_{1}= |  {f^-}_1\rangle\langle {f^+}_1 | =\left(\begin{array}{cc}1/2&1/2\\
-1/2&-1/2\end{array}\right)
\end{equation}
yielding  another  basis in the linear space of
2$\times$2-matrices.

Finally, for the observable associated to the Pauli matrix $\sigma_2=\left(\begin{array}{cc}0&-i\\
i&0\end{array}\right)$, the orthonormal basis of eigenvectors is
\begin{equation}\label{A6}
 |  f^+_2\rangle=\frac{1}{\sqrt 2}\left(\begin{array}{cc}1\\i\end{array}\right),\quad
 |  f^-_2\rangle=\frac{1}{\sqrt 2}\left(\begin{array}{cc}1\\
-i\end{array}\right).
\end{equation}
They  determine the rank-one
projectors $\hat {\Pi}_2^+, \hat {\Pi}_2^-;$
\begin{equation}\label{A7}
  \hat{\Pi}^+_{2}= |  f^+_2\rangle\langle f^+_2 | =\left(\begin{array}{cc}1/2&-i/2\\
i/2&1/2\end{array}\right),\quad
\hat{\Pi}^-_{2}= |  f^-_2\rangle\langle f^-_2 | =\left(\begin{array}{cc}1/2&i/2\\
-i/2&1/2\end{array}\right)
\end{equation}
and the matrices
\begin{equation}\label{A8}
  \hat{\Pi}^\pm_{2}= |  f^+_2\rangle\langle f^-_2 | =\left(\begin{array}{cc}1/2&i/2\\
i/2&-1/2\end{array}\right),\quad
\hat{\Pi}^\mp_{2}= |  f^-_2\rangle\langle f^+_2 | =\left(\begin{array}{cc}1/2&-i/2\\
-i/2&-1/2\end{array}\right).
\end{equation}
which  form yet another basis in the linear space of
2$\times$2-matrices.
We introduce for future convenience  a compact notation for all  rank one-projectors, namely
\be\label{proj2}
\hat{\Pi}^+_{a}= \frac{1}{2}(\sigma_0+ \sigma_a); \;\;\; \hat{\Pi}^-_{a}= \frac{1}{2}(\sigma_0- \sigma_a).
\ee
For each one of these bases, by  using just rank-one projectors over positive eigenstates, we may associate  two-dimensional probability
vectors, say, $(p_1,1-p_1)$; $(p_2,1-p_2)$; $(p_3,1-p_3)$ to a given state $\rho$, be it pure or mixed. We have indeed
\begin{equation}\label{probvec}
  p_1(\rho)=\Tr \rho \hat{\Pi}^+_{1},  \quad
 p_2(\rho)=\Tr \rho \hat{\Pi}^+_{2},  \quad
 p_3(\rho)=\Tr \rho \hat{\Pi}^+_{3},  \quad
\end{equation}
and  analogous relations for $1-p_i$ in terms of rank-one projectors of negative eigenstates. Eqs. \eqn{probvec} define genuine probabilities $0\le p_i\le 1$ because $\rho$ is a positive Hermitean matrix while $\hat{\Pi}^+_{a}$ are rank-one projectors.

In order to discuss the dependence of the dichotomic probability representation on the choice of rank-one projectors, let us observe that  Eq. \eqn{proj2} for $\hat\Pi^+_a$  may be rewritten in the following form
\be
\hat{\Pi}^+_{a}= \frac{1}{2}(\sigma_0+ \vec x_a \cdot \vec \sigma),\;\;\;\;\;\;  (\vec x_a)_j= \delta_{aj}, a,j=1,2,3.
\ee
This implies that, upon rotation of the three vectors $\vec x_a$, we obtain  rotated projectors $(\hat{\Pi}^+_a)'$.
By means of the standard double covering of $SO(3)$ by $SU(2)$, we have indeed
\be\label{xrot}
\vec x'_a= R \vec x_a \longrightarrow  (\vec x_a \cdot \vec \sigma)'= U(\vec x_a \cdot \vec \sigma) U^\dag, \;\; R\in SO(3), U\in SU(2)
\ee
so that
\be\label{Pirot}
(\hat{\Pi}^+_{a})'= U \hat{\Pi}^+_a U^\dag
\ee
and
\be\label{rotlaw}
p'_a(\rho)=\Tr \rho (\hat{\Pi}^+_{a})' = \Tr U^\dag \rho U \hat{\Pi}^+_{a}
\ee
yielding the transformation law  of the  dichotomic probabilities under rotation of rank-one projectors. This result can be easily generalized to the $d$-dimensional case, as we shall see in next section.

By means of these dichotomic probabilities it can be easily checked by direct computation that we can reconstruct the
 state by setting
\begin{equation}\label{3}
 \rho=(\sigma_0/2)+(p_1-1/2)\sigma_1+(p_2-1/2)\sigma_2+(p_3-1/2)\sigma_3.
\end{equation}
We notice, although trivial for $d=2$, that the latter is equivalent to the tomographic approach, where, given a state
\begin{equation}
\rho= \frac{1}{2}\left(\sigma_0+ \vec y \cdot \vec \sigma \right)
\end{equation}
we have the tomographic relation (see for example \cite{Vitale16})
\begin{equation}
y_j= 2(\mathcal{W}_j-\frac{1}{2})
\end{equation}
with  $\mathcal{W}_j=p_j, j=1,..,3,$ the tomographic  probabilities.

Conversely, given a family of dichotomic probabilities $(p_j, 1-p_j), j=1,..,3,$ Eq. \eqn{3}  can be chosen as a definition of a mixed state $\rho$. Indeed, the latter is Hermitean and can be checked to verify $\Tr \rho =1$. Moreover, its determinant is nonnegative if   the
coefficients  satisfy the inequality
\begin{equation}\label{4}
(p_1-1/2)^2+(p_2-1/2)^2+(p_3-1/2)^2\leq 1/4
\end{equation}
If the dichotomic variables are not correlated, we
have
\begin{equation}\label{5}
  (p_1-1/2)^2+(p_2-1/2)^2+(p_3-1/2)^2\leq 3/4,
\end{equation}
the only constraint being $0\leq p_j\leq
1,\quad j=1,2,3$.
Notice that the inequalities are also satisfied if we use $(1-p_j)$
instead of $p_j$.

Eq. \eqn{3} describes a mixed state in terms of tomograms. If the inequality is saturated, we are dealing with pure states,
i.e., $\rho^2=\rho$.

 Thus, out of three dichotomic probability
distributions, we have been able to reconstruct a state.

Finally, in order to make contact with the coming sections and make it easier to generalize the results to higher dimensions, it is useful to rewrite Eq. \eqn{3} in the Weyl basis for $\mathfrak{u}_2$. To this, let us introduce an orthonormal basis in ${\C}^2$, say $\{ |  e_1\rangle,  |  e_2\rangle\}$. The Weyl basis is represented by four rank-one operators
\begin{equation}
 E_{jk}=  |  e_j\rangle\langle e_k | ,\quad j,k=1,2 .
\end{equation}
By expressing the Hermitian $\mathfrak{u}_2$ generators in terms of the latter
\begin{eqnarray}
\frac{1}{2}\sigma_0&=& \frac{1}{2}(E_{11}+E_{22}), \;\; \frac{1}{2}\sigma_3= \frac{1}{2}(E_{11}-E_{22}) \nonumber\\
\frac{1}{2}\sigma_1&=& \frac{1}{2}(E_{12}+E_{21}), \;\; \frac{1}{2}\sigma_2= \frac{i}{2}(E_{12}-E_{21})
\end{eqnarray}
Eq. \eqn{3} may be rewritten according to
\begin{eqnarray}
\rho&=& E_{11} p_3 +E_{22} (1-p_3) \nonumber\\
&+&  E_{12}\left[(p_1-\frac{1}{2}) - i (p_2-\frac{1}{2})\right]+
E_{21}\left[(p_1-\frac{1}{2}) + i (p_2-\frac{1}{2})\right].
\end{eqnarray}
We notice, for future convenience, that  the  diagonal elements are directly expressed in terms of  the diagonal rank-one projectors associated to $\sigma_3$,  it being
\be
E_{11} = \hat {\Pi}_3^+; \;\;\; E_{22} = {\hat {\Pi}_3^-}.
\ee
Moreover the pertinent matrix entries $\rho_{jj}$ are completely determined by either the positive-eigenvalue projector or the negative one, it being
\be
p_3= \Tr (\rho\hat{\Pi}^+_{3}); \;\;\;  1-p_3= \Tr (\rho\hat{\Pi}^-_{3}).
\ee
Therefore we can rewrite the state $\rho$ as follows
\be
\rho= p_3 E_{11}  + (1-p_3)E_{22}  + \sum_{j \ne k}\rho^{(jk)} E_{jk}
\ee
with
\be
\rho^{(jk)}= \Tr \rho E^T_{jk}.
\end{equation}
This remark will be relevant for higher level systems. This example shows very well how pure states may give rise to all probability vectors in the "classical simplex",indeed the terms with j not equal to k play no role in the expression of the probability vector and their role is simply to change the rank of the state which is being represented.

\subsection{The  qutrit case}
According to  our previous considerations and notation, here $d=3$,  the Hilbert
space is ${\mathcal H}\equiv { \mathbb C}^3$   and we have to measure $d+1= 4$ independent families of $d-1=2$ commuting observables, which, as for the two-levels system,  can be chosen to be a Cartan subalgebra of the relevant Lie algebra, here  $\mathfrak{u}(3)$, in four different realizations. For each choice of Cartan subalgebra, to which we add the identity, their joint diagonalization yields  three  eigenvectors, which play the role of the eigenvectors of Pauli matrices in the previous subsection. This is the tomographic approach, which allows to  reconstruct the state as it is  detailed  in \cite{Vitale16}.

Such a procedure becomes however difficult to apply in practice with increasing number of levels.  The approach we want to pursue in this paper is, instead,
to  use our knowledge of the two-levels system and characterize all parameters of a qudit state in terms of dichotomic probabilities which are amenable  to the $\mathfrak{u}(2)$ subalgebras of the relevant $\mathfrak{u}(n)$.

To this,  let us consider an   orthonormal basis in  $ { \mathbb C}^3$, say $ |  e_1\rangle$,  $ |
e_2\rangle$,  $ |  e_3\rangle$ and let us  construct for $\mathfrak{u}(3)$ the Weyl  basis
of nine rank-one operators, $E_{jk}= |  e_j\rangle\langle e_k | , j,k=1,..,3$
\begin{eqnarray}\label{u3}
 |  e_1\rangle\langle e_1 | \quad  |  e_1\rangle\langle
e_2 | \quad  |  e_1\rangle\langle e_3 | \nonumber\\
 |  e_2\rangle\langle e_1 | \quad  |  e_2\rangle\langle
e_2 | \quad  |  e_2\rangle\langle e_3 | \\
 |  e_3\rangle\langle e_1 | \quad  |  e_3\rangle\langle
e_2 | \quad  |  e_3\rangle\langle e_3 | \nonumber
\end{eqnarray}
all of them providing a representation of a pair
groupoid~\cite{Gagloi,Swinger}. Let us notice that an alternative basis for $\mathfrak{u}(3)$ is represented by the eight Gell--Mann matrices $\lambda_i$ to which we add  the identity. The latter, which was used in  \cite{Vitale16}, is however not convenient for the present purposes, and, once again, not immediately generalizable to higher dimensions.

It is now easy to see that, in a natural way, we have
the possibility to define three different $\mathfrak{u}(2)$ bases, namely,
\begin{enumerate}
\item[] ${\mathfrak{u}(2)}^{12} :\quad  \left\{  |  e_1\rangle\langle e_1 | ,\quad  |  e_1\rangle\langle
e_2 | ,\quad  |  e_2\rangle\langle e_1 | ,\quad  |  e_2\rangle\langle e_2 |  \right\}$
\item[] ${\mathfrak{u}(2)}^{13}: \quad \left\{ |  e_1\rangle\langle e_1 | ,\quad  |  e_1\rangle\langle
e_3 | ,\quad  |  e_3\rangle\langle e_1 | ,\quad  |
e_3\rangle\langle e_3 |  \right\}$
\item[] ${\mathfrak{u}(2)}^{23}:\quad \left\{  |  e_2\rangle\langle e_2 | ,\quad  |  e_2\rangle\langle
e_3 | ,\quad  |  e_3\rangle\langle e_2 | ,\quad  |
e_3\rangle\langle e_3 | \right\}$
\end{enumerate}
which are obtained by the array \eqn{u3} removing, in the order, the third row and third column, the second row and the second column, the first row and the first column.

For each $\mathfrak{u}(2)$, namely, for each choice of   $(jk)$,  $j,k\in(1,2,3)$ and $j<k$,  we can realize  Hermitean $\mathfrak{u}_2$ generators $S_\mu, \mu= 0,..,3$ acting on the $(jk)$ plane, according to
\begin{eqnarray}
S_0^{(jk)}=\frac{1}{2}(E_{jj}+E_{kk}) \quad  S_3^{(jk)}= \frac{1}{2}(E_{jj}-E_{kk})\\
S_1^{(jk)}=\frac{1}{2}(E_{jk}+E_{kj}) \quad  S_2^{(jk)}=- \frac{i}{2}(E_{jk}-E_{kj})\,.
\end{eqnarray}
Since
\be
\Tr E^T_{jk} E_{mn}= \delta_{km}\delta_{jn}
\ee
the Hermitean $\mathfrak{u}(2) $ generators $S_\mu^{(jk)}$ are orthonormal with respect to the scalar product $\langle A |  B\rangle=\mbox{Tr}\,A^\dagger B$.

For each $\mathfrak{u}(2)$ we can apply the procedure described in previous section to obtain rank-one projectors. We consider the  eigenvector $|f^+\rangle$ of positive eigenvalue, for each Hermitean generator of each $\mathfrak{u}(2)$ algebra, namely: $|f^{(12)}_{1}\rangle, |f^{(12)}_{2}\rangle,|f^{(12)}_{3}\rangle$, $|f^{(13)}_{1}\rangle,|f^{(13)}_{2}\rangle,|f^{(13)}_{3}\rangle$, $|f^{(23)}_{1}\rangle,|f^{(23)}_{2}\rangle,|f^{(23)}_{3}\rangle$ (where we have omitted the superscript $+$; we could have chosen to work with the eigenvectors of negative eigenvalue, as shown in the previous section) and we construct rank one-projectors
\be\label{proj}
\hat{\Pi}^{(jk)}_a= |f^{(jk)}_a\rangle\langle f^{(jk)}_a|= S_0^{(jk)} + S_a^{(jk)},\;\;\; a= 1,..,3
\ee
which explicitly read
\begin{eqnarray*}
  \hat{\Pi}^{(12)}_1=\left(\begin{array}{ccc}
1/2&1/2&0\\
1/2&1/2&0\\
0&0&0\end{array}\right),\quad \hat{\Pi}^{(12)}_2=\left(\begin{array}{ccc}
1/2&-i/2&0\\
i/2&1/2&0\\
0&0&0\end{array}\right),\quad \hat{\Pi}^{(12)}_3=\left(\begin{array}{ccc}
1&0&0\\
0&0&0\\
0&0&0\end{array}\right),\\
  \hat{\Pi}^{(13)}_1=\left(\begin{array}{ccc}
1/2&0&1/2\\
0&0&0\\
1/2&0&1/2\end{array}\right),\quad \hat{\Pi}^{(13)}_2=\left(\begin{array}{ccc}
1/2&0&-i/2\\
0&0&0\\
i/2&0&1/2\end{array}\right),\quad \hat{\Pi}^{(13)}_3=\hat{\Pi}^{(12)}_3,\\
  \hat{\Pi}^{(23)}_1=\left(\begin{array}{ccc}
0&0&0\\
0&1/2&1/2\\
0&1/2&1/2\end{array}\right),\quad \hat{\Pi}^{(23)}_2=\left(\begin{array}{ccc}
0&0&0\\
0&1/2&-i/2\\
0&i/2&1/2\end{array}\right),\quad \hat{\Pi}^{(23)}_3=\left(\begin{array}{ccc} 0&0&0\\
0&1&0\\
0&0&0\end{array}\right).
\end{eqnarray*}
To these we associate the dichotomic probabilities $(p_a^{(jk)}, 1-p_a^{(jk)})$ given by
\be \label{prho}
p_a^{(jk)}= \Tr\rho \hat{\Pi}_a^{(jk)}=\langle f^{(jk)}_a|\rho|f^{(jk)}_a\rangle.
\ee
These are indeed real positive numbers not greater than one, because $\rho$ is a positive Hermitean matrix (a state), whereas $
 \hat{\Pi}_a^{(jk)}$ are rank-one projectors (pure states).

 Notice that these dichotomic probabilities refer to $\hat{\Pi}^+$ projectors. Only for the qubit case the second component of the probability vector, namely $1-p$, can be obtained by projecting the density state on $\hat{\Pi}^-$. In general we have to choose either  positive or negative projectors to work. In this paper we use the positive ones.

 In order to fully determine the state $\rho$ we have to invert \eqn{prho} for the matrix elements of $\rho$.
As for the diagonal entries, we observe that  for the diagonal projectors it holds
\be\label{Pi3}
\hat{\Pi}_3^{(jk)} = E_{jj},\;\;\;  j<k
\ee
namely, they are given by the diagonal elements of the Weyl basis, hence independent on the  index $k$,  in the $(jk)$ plane, as we can verify in the table above, where $\hat{\Pi}^{(13)}_3=\hat{\Pi}^{(12)}_3= E_{11}$. This implies that the probabilities $p_3^{(jk)}, j<k\le 3$ only depend on the first of the two indices, $(jk)$, labelling the plane. We shall therefore use the notation
$
p_3^{(jk)}\rightarrow p_3^{(jj)}$, and,  we derive, from \eqn{prho}, \eqn{Pi3}
\beqa
\rho_{jj}&=& p_3^{(jj)}; \;\;\;  j=1,2\\
\rho_{33}&=&1-\sum_j^{2} p_3^{(jj)}
\eeqa
where $\Tr\rho=1$ has been used.

As for the off-diagonal entries of the matrix $\rho$, according to  Eq. \eqn{prho} we have to
 consider dichotomic probabilities associated to the off-diagonal projectors $\hat{\Pi}^{(jk)}_a, \; a=1,2$. These  allow to determine  the $6$ off-diagonal entries   $\rho_{jk}$ by means of  the relation
\be
\rho_{jk}= \Tr \rho^{(jk)}E_{jk}^T ;\;\;\; j<k
\ee
where we have introduced {\it auxiliary qubit states } $\rho^{(jk)}$ as follows
\be
 \rho^{(jk)}=S^{(jk)}_0+\left[2 p^{(jk)}_1-(p_3^{(jj)}+p_3^{(kk)})\right]S^{(jk)}_1-i \left[2 p^{(jk)}_2-(p_3^{(jj)}+p_3^{(kk)})\right]S^{(jk)}_2
\ee
Matrix elements $\rho_{jk}, j>k$ are obtained by complex conjugation.
Explicitly in terms of the Weyl basis we have then
\small{
\beqa\label{rhomat}
\rho&=& E_{11}p_3^{(11)} + E_{22}p_3^{(22)} + E_{33}(1- p_3^{(11)}-p_3^{(22)})\nn \\
 &+& \left(
E_{12} \left[p_1^{(12)}-\frac{1}{2}(p_3^{(11)}+p_3^{(22)})-i (p_2^{(12)}-\frac{1}{2}(p_3^{(11)}+p_3^{(22)}))\right] \right.\nn\\
&+& E_{13} \left[p_1^{(13)}-\frac{1}{2}(p_3^{(11)}+p_3^{(33)})-i (p_2^{(13)}-\frac{1}{2}(p_3^{(11)}+p_3^{(33)}))\right]\nn\\
&+& E_{23} \left[p_1^{(23)}-\frac{1}{2}(p_3^{(22)}+p_3^{(33)})-i (p_2^{(23)}-\frac{1}{2}(p_3^{(22)}+p_3^{(33)}))\right]  \nn\\
 &+& \left. h.c.\right).
\eeqa}
Similarly to   the two-level system, the diagonal elements are associated to rank-one projectors  of positive eigenvalue of the observable $S_3^{(jk)}$, except for
$\rho_{33}$ which is obtained by the others through the constraint $\Tr\rho=1$.

\section{Qudit generalization}\label{qNit}
For a system with $d$ levels the Hilbert space $\mathcal{H}=\C^d$ is spanned by $d$ orthonormal vectors $ |  e_1\rangle, ...,  |  e_d\rangle$. The Lie algebra $\mathfrak{u}(d)$ can be described in terms of the Weyl basis $E_{jk}, j,k=1,...,d$.  As previously, we have   ${d!}/{2}$ different  $\mathfrak{u}(2)$ subalgebras, labelled by $(jk)$, $j\ne k$, and associate with  each of them  the Hermitean generators
\begin{eqnarray}
S_0^{(jk)}=\frac{1}{2}(E_{jj}+E_{kk}) \quad  S_3^{(jk)}= \frac{1}{2}(E_{jj}-E_{kk})\\
S_1^{(jk)}=\frac{1}{2}(E_{jk}+E_{kj}) \quad  S_2^{(jk)}=- \frac{i}{2}(E_{jk}-E_{kj})
\end{eqnarray}
acting on the $(j,k)$ plane. The eigenvalues of the operators $S_a^{(jk)}$
are equal to $+1/2$ and $-1/2$. They can be interpreted as spin
projections along the $x,y,z$ axes. For $d$-level atoms, these
eigenvalues and the corresponding eigenvectors may be  related to $j$th and $k$th
levels  when other levels are not excited.

Hence we construct the  rank-one projectors relative to the positive eigenvalue of each $S_a^{(jk)}, a=1,..,3$ generator according to
\be\label{projN}
\hat{\Pi}^{(jk)}_a= |f^{(jk)}_a\rangle\langle f^{(jk)}_a|= S_0^{(jk)} + S_a^{(jk)},\;\;\; a= 1,..,3.
\ee

Out of the ${d!}/{2}$ different $\mathfrak{u}(2)$ subalgebras, we select  those labelled by $(jk)$, with $j<k$. They are $d(d-1)/2$.
Hence we compute
\be\label{dicoprob}
p^{(jk)}_a= \Tr \rho \hat{\Pi}^{(jk)}_a
\ee
obtaining explicitly
\beqa
p_1^{(jk)}&=&\frac{1}{2}(\rho_{jj}+\rho_{kk})+ {\rm Re} \rho_{jk}\label{pr1}\\
p_2^{(jk)}&=&\frac{1}{2}(\rho_{jj}+\rho_{kk})- {\rm Im} \rho_{jk}\label{pr2}\\
p_3^{(jk)}&=&\rho_{jj}.\label{pr3}
\eeqa
They define  dichotomic probability vectors $(p^{(jk)}_a, 1-p^{(jk)}_a)$ for each $a=1,..,3$, and each couple $(j,k), j<k$, corresponding to $3d(d-1)/2$ probabilities.
Then we observe, as in the previous, two- and three-dimensional cases, that the diagonal projectors $ \hat{\Pi}^{(jk)}_3$ are independent of the second index in any $(jk)$ plane and coincide with the diagonal elements of the Weyl basis:
\be\label{diapro}
 \hat{\Pi}^{(jk)}_3= \hat{\Pi}^{(jh)}_3= E_{jj}, \;\;\; k\ne h, \;\; j=1,...,N-1\,.
 \ee
 This implies that, out of the $d(d-1)/2$  probabilities $p_3^{(jk)}$,   only $d-1$ are different. Thus the total number of independent parameters is $2 d(d-1)/2+ d-1= (d+1)(d-1)$.
 In other words, our choice of the $\mathfrak{u}(2)^{(jk)}$ subalgebras with $j<k$ provides us with a quorum.

Summarizing,  we are ready to state the following:

 \begin{theorem}\label{theodicho} Let $\rho$ be a qudit state and $p_a^{(jk)}, a=1,2,3, j<k = 1,...,N$  dichotomic probabilities, defined by Eq. \eqn{dicoprob}.
\begin{enumerate}[label=(\roman*)]
\item We have, for   the diagonal elements
 \be 
 \rho_{jj}= \Tr\rho E_{jj}=p_3^{(jj)}, \;\;\; j=1,...,N-1; \;\;\; \rho_{N}= 1-\sum_j^{N-1} p_3^{(jj)} \label{diagN}
 \ee
 where we have re-labeled as previously $p^{(jk)}_3\rightarrow p_3^{(jj)}$.
\item The off-diagonal elements are obtained by
\be\label{offdiagN}
\rho_{jk}= \Tr \rho^{(jk)}E_{jk}^T , \;\;\; j<k
\ee
with
  \be\label{auxqb}
  \rho^{(jk)}= S^{(jk)}_0+[2 p^{(jk)}_1-(p_3^{(jj)} + p_3^{(kk)})]S^{(jk)}_1-i[2 p^{(jk)}_2-(p_3^{(jj)} + p_3^{(kk)})]S^{(jk)}_2
\;\;\;  1\le j<k\le N
 \ee
 auxiliary qubit states. The matrix elements $\rho_{kj} $ are  given  by  complex conjugation.
 \end{enumerate}
   \end{theorem}
\begin{proof} The  first statement is an immediate  consequence of Eq. \eqn{diapro}.  The second statement can be checked  by direct computation of the RHS of Eq. \eqn{offdiagN}, on using the auxiliary qubits \eqn{auxqb} and Eqs. \eqn{projN}, \eqn{dicoprob}.
\end{proof}
In terms of the  matrix elements obtained above in  Eqs. \eqn{diagN}, \eqn{offdiagN},  the explicit form of the density state $\rho$  in the Weyl basis can be readily written down and the result is a straightforward generalisation of Eq. \eqn{rhomat}.

Let us now discuss in full generality the dependence of the dichotomic probability representation on the choice of rank-one projectors. In the present case, for any $(j,k)$-plane we have
\be\label{PiN}
\hat{\Pi}^{(jk)}_a= S_0^{(jk)} + \vec x_a\cdot \vec S^{(jk)},
\ee
with $\vec S^{(jk)}=(S_1^{(jk)},S_2^{(jk)},S_3^{(jk)})$ and $(\vec x_a)_b= \delta_{ab}, \; a,b=1,2,3$.

As before, a  rotation of the three vectors $\vec x_a$ entails   rotated projectors $(\hat{\Pi}^{(jk)}_a)'$.
We have indeed
\be\label{xrotN}
  \vec x'_a= R^{(jk)} \vec x_a \longrightarrow  (\vec x_a\cdot \vec S^{(jk)})'= U^{(jk)}(\vec x_a\cdot \vec S^{(jk)}) {U^{(jk)}}^\dag, \;\;\;\;\; R^{(jk)}\in SO(3), U^{(jk)} \in SU(2)^{(jk)}
\ee
so that
\be\label{PirotN}
(\hat{\Pi}^{(jk)}_{a})' = U^{(jk)} \hat{\Pi}^{(jk)}_a {U^{(jk)}}^\dag
\ee
and
\be\label{rotlawN}
(p^{(jk)}_a)'(\rho)=\Tr \rho (\hat{\Pi}^{(jk)}_{a})' = \Tr {U^{(jk)}}^\dag \rho U^{(jk)} \hat{\Pi}^{(jk)}_{a}
\ee
yielding the transformation law  of the of dichotomic probabilities under rotation of rank-one projectors. Notice that, in order to preserve Eqs. \eqn{diapro} and \eqn{rhoelms} one has to choose one and the same rotation, $R^{(jk)}=R$,   in any $(j,k)$-plane.

\section{Reduction of the density matrix}\label{redumats}
We have shown in  previous section that  qudit states can be described in terms of
a set of different $(d^2-1)$ dichotomic probabilities
$\left(p_{1,2,3}^{(jk)},~1-p_{1,2,3}^{(jk)}\right)$ of classical-like random
variables. These probability distributions must satisfy the Silvester criterion
of  nonnegativity of the density operator, $\rho\geq 0$, i.e.,
eigenvalues of this operator must be nonnegative. Moreover the principal minors of the
operator $\rho$ in an arbitrary orthogonal basis must be nonnegative.

In this section we shall illustrate how these inequalities give rise to  quantum correlations for the  auxiliary qubits associated to qudit states.

To be definite, let us start with  a qudit state, $\rho$, represented by a $d\times d$ matrix, $d=n\cdot m$.
Let us consider two orthonormal bases,  $\;\{ | e_j\rangle,  j=1,...,n\}$, $\; \{|f_j\rangle,  j=1,...,m\}\, ,\; $ for the complex vector spaces $\C^n, \C^m$ respectively, and let us introduce in the space of $n\times n$, respectively   $m\times m $ complex matrices, the natural bases
\be
E_{jk}= |e_j\rangle \langle e_k| , j,k=1,...,n\;\;\;\; F_{jk}=| f_j\rangle \langle f_k|, j,k=1,...,m\,.
\ee
Hence, $\rho$ may be rewritten as follows
\be\label{rhotens1}
\rho= 
E_{11}\otimes R_{11} + E_{12}\otimes R_{12}+ ...+ E_{nn}\otimes R_{nn}
 \ee
with $R_{jk}$ $m\times m $ complex matrices defined by
\be
R_{jk}= \sum _{p,q=1}^m  \langle e_j \otimes  f_p| \rho |e_k \otimes f_q\rangle F_{pq}
\ee
so to have $\rho$ rearranged into $n^2$ blocks, each one  of $m\times m$ dimension
\be
\rho= \left(\begin{array}{cccc}R_{11}&R_{12} & ..& R_{1n}\\
..&..&..&.. \\
R_{n1}&R_{n2}&..&R_{nn}\end{array}\right)
\ee
We then define a $n\times n$ matrix $\rho_1$ by taking the partial trace over the second element of the tensor product
\be\label{rho1}
\rho_1:=\Tr_2 \rho= \sum_{j,k=1}^n E_{jk} \Tr R_{jk}
\ee
Alternatively, we can trace over the first element of the tensor product. Since $\Tr E_{jk} = \delta_{jk}$,  we
obtain a $m\times m$ matrix, $\rho_2$
\be\label{rho2}
\rho_2:=\Tr_1 \rho= \sum_{j,k=1}^n \delta_{jk} R_{jk}= R_{11}+...+R_{nn}.
\ee
We can actually exchange the role of the two bases, $E_{jk}, F_{jk}$ and express $\rho$ as follows
\be\label{rhotens2}
\rho
= \sum_{p,q=1}^m
 \tilde R_{pq}\otimes F_{pq}
 \ee
with $\tilde R_{pq}$ $n\times n $ complex matrices defined by
\be
\tilde R_{pq}= \sum _{j,k=1}^n  \langle f_p \otimes  e_j|\, \rho\, |f_q \otimes e_k\rangle E_{jk}
\ee
so to have $\rho$ rearranged into $m^2$ blocks of $n\times n$ dimension
\be
\rho= \left(\begin{array}{cccc}\tilde R_{11}&\tilde R_{12} & ..& \tilde R_{1m}\\
..&..&..&.. \\
\tilde R_{m1}&\tilde R_{m2}&..&\tilde R_{mm}\end{array}\right)
\ee
We then define a $n\times n$ matrix, $\tilde \rho_1$, by taking the partial trace over the second element of the tensor product in Eq. \eqn{rhotens2}
\be\label{trho1}
\tilde\rho_1= \Tr_2 \rho= \sum_{p,q=1}^m \delta_{pq} \tilde R_{pq}= \tilde R_{11}+...+\tilde R_{mm}.
\ee
By tracing over the first element we get instead
\be\label{trho2}
\tilde\rho_2= \Tr_1 \rho= \sum_{p,q=1}^m ( \Tr \tilde R_{pq}) F_{pq}.
\ee
Before showing that $\rho_{1,2}, \tilde \rho_{1,2}$ are all density states, namely nonnegative, Hermitean, trace-one complex matrices, for any value of $n,m$, let us see how the construction works for the simple
case of qu-quart states with $n=m=2$,
\be
\rho= \left(\begin{array}{cccc}
\rho_{11}&\rho_{12} &\rho_{13}& \rho_{14}\\
\rho_{21}&\rho_{22} &\rho_{23}& \rho_{24}\\
\rho_{31}&\rho_{32} &\rho_{33}& \rho_{34}\\
\rho_{41}&\rho_{42} &\rho_{43}& \rho_{44}
\end{array}\right) = \left(\begin{array}{cc}R_{11}&R_{12}\\R_{21}&R_{22}\end{array}\right),
\ee
with $R_{11}= R_{11}^\dag, R_{22}= R_{22}^\dag, R_{21}= R_{12}^\dag,$ $2\times 2$-block matrices.
The general expressions given above reduce therefore to
\be
\rho_1=\tilde\rho_2= \left(\begin{array}{cc}\Tr R_{11}&\Tr R_{12} \\ \Tr R_{21}&\Tr R_{22}\end{array}\right) \;\;\; {\rm and}\;\;\;  \rho_2= \tilde\rho_1= R_{11}+R_{22}.
\ee
It is readily seen that   the latter are Hermitian and trace-one matrices
\be
\rho_1^\dagger=\rho_1,\quad \mbox{Tr}\,\rho_1=1; \;\;\;\;\;\;\;\;\;\;
\rho^\dagger_2=\rho_2,\quad \mbox{Tr}\,\rho_2=1
.\label{D5}
\ee
Nonnegativity is proven below,  directly for the general case of a qudit, with  $d=nm$.

To this aim,  we shall need the following
well known results (see for example \cite{Bhatia, ReedSimon, Naimark}:
\begin{proposition}\label{prop1} Let $\mathcal{B}(\mathcal{H})$ denote the bounded operators on a Hilbert space $\mathcal{H}$. For any positive operator $ B\in \mathcal{B}(\mathcal{H})$ there exists an operator  $A\in \mathcal{B}(\mathcal{H})$ such that
$$
B= A^\dag A .
$$
\end{proposition}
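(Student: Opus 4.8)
The plan is to take $A$ to be the positive square root of $B$, so that $A^\dagger A = A^2 = B$. In the finite-dimensional setting relevant to the bulk of this paper one may simply diagonalize $B = U D U^\dagger$ with $D$ diagonal and nonnegative and set $A = U D^{1/2} U^\dagger$; the general bounded case requires the square root to be built analytically. Since $B$ is positive it is self-adjoint and bounded, so $\sigma(B)$ is a compact subset of $[0,\|B\|]$; the function $t \mapsto \sqrt{t}$ is continuous there, and the continuous functional calculus produces a bounded self-adjoint operator $B^{1/2} := f(B)$, with $f(t) = \sqrt{t}$, satisfying $\big(B^{1/2}\big)^2 = B$ and $\big(B^{1/2}\big)^\dagger = B^{1/2}$. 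Then $A = B^{1/2}$ does the job.

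For a self-contained argument not invoking the spectral theorem, I would construct $\sqrt{t}$ as a norm-limit of polynomials. After rescaling assume $0 \le B \le I$; put $C = I - B$, so $0 \le C \le I$, and define $Y_0 = 0$ and $Y_{n+1} = \frac{1}{2}\big(C + Y_n^2\big)$. By induction each $Y_n$ is a polynomial in $C$ with nonnegative coefficients, the $Y_n$ pairwise commute, and $0 \le Y_n \le Y_{n+1} \le I$. Since for each fixed $c \in [0,1]$ the scalar recursion $y_{n+1} = \frac{1}{2}(c + y_n^2)$ increases to $1 - \sqrt{1-c}$, and this convergence is uniform in $c$ by Dini's theorem, the operators $Y_n$ converge in operator norm to some $Y$ satisfying $Y = \frac{1}{2}\big(C + Y^2\big)$, equivalently $(I - Y)^2 = I - C = B$. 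Hence $A = I - Y$ is bounded self-adjoint with $A^\dagger A = A^2 = B$.

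The only genuinely delicate point in the elementary route is the positivity and monotonicity bookkeeping — establishing $0 \le Y_n \le Y_{n+1} \le I$ for all $n$ — which rests on the elementary fact that the product of two commuting positive operators is positive (applied to polynomials in $C$) together with continuity of operator multiplication when passing to the limit. Everything else (self-adjointness and boundedness of the limit, the norm estimate $\|A\| \le \|B\|^{1/2}$, and the identity $A^\dagger A = B$) is routine. I would state the proposition as the existence of the square root, remark in passing that the positive square root is in fact unique, and cite \cite{ReedSimon, Naimark} for the standard treatment.
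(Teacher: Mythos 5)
Your proposal is correct, but note that the paper does not actually prove this proposition: it is quoted as a well-known result and simply referenced to \cite{ReedSimon, Naimark}. Your argument --- take $A=B^{1/2}$, constructed either by the continuous functional calculus or by the iteration $Y_{n+1}=\frac{1}{2}(C+Y_n^2)$ with $C=\mathbf{1}-B$ --- is exactly the standard square-root existence proof found in those references, so you have supplied in full what the authors delegate to the literature. One small remark on your ``elementary'' route: you can even dispense with Dini and with the fact that the norm of a self-adjoint operator equals its spectral radius, since for $n>m$ the polynomial $y_n-y_m$ has nonnegative coefficients, whence $\|Y_n-Y_m\|\le y_n(1)-y_m(1)\to 0$; this telescoping bound gives norm convergence directly and keeps the whole construction at the level of positivity bookkeeping you already set up.
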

\begin{proposition}\label{prop2} A density state, $\rho$,  is a positive linear functional over   $\mathcal{B}(H)$ iff it is nonnegative when evaluated on the positive elements of  $\mathcal{B}(H)$, that is,
$$
\rho(A^\dag A) \ge 0
$$
with $A\in \mathcal{B}(H)$.
\end{proposition}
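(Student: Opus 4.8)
The plan is to read off both directions of Proposition \ref{prop2} from Proposition \ref{prop1}, which identifies the positive elements of $\mathcal{B}(\mathcal{H})$ with the operators of the form $A^\dagger A$. I would first recall the definition being used: $\rho$ is a positive linear functional exactly when $\rho(B)\ge 0$ for every positive $B\in\mathcal{B}(\mathcal{H})$, which in the density-matrix picture reads $\mbox{Tr}\,(\rho B)\ge 0$. So the claim is really the assertion that testing positivity of $\rho$ on all positive operators is the same as testing it on operators of the special shape $A^\dagger A$.

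For the forward implication I would take an arbitrary $A\in\mathcal{B}(\mathcal{H})$ and note that $A^\dagger A$ is self-adjoint with $\langle\psi|A^\dagger A|\psi\rangle=\|A\psi\|^2\ge 0$ for every $\psi\in\mathcal{H}$, hence a positive operator; positivity of the functional $\rho$ then yields $\rho(A^\dagger A)\ge 0$ immediately. For the converse I would assume $\rho(A^\dagger A)\ge 0$ for all $A$ and pick any positive $B$; by Proposition \ref{prop1} there is $A\in\mathcal{B}(\mathcal{H})$ with $B=A^\dagger A$ — in the finite-dimensional setting of interest one may simply take $A=B^{1/2}$, the positive square root furnished by the functional calculus — so that $\rho(B)=\rho(A^\dagger A)\ge 0$. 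Since $B\ge 0$ was arbitrary, $\rho$ is a positive linear functional, completing the equivalence.

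I do not anticipate a genuine obstacle: the whole content sits in Proposition \ref{prop1}, after which each inclusion is a one-line argument, the statement being essentially a repackaging of the characterization ``$B\ge 0 \iff B=A^\dagger A$''. The only point worth flagging is that no continuity or boundedness hypothesis on $\rho$ beyond linearity is needed, since on the finite-dimensional Hilbert spaces considered here every linear functional is automatically bounded; the argument is purely algebraic.
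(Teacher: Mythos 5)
Your proof is correct, and there is nothing to compare it against in the paper itself: Proposition \ref{prop2} is stated there as a well-known fact with a citation to Reed--Simon and Naimark, without an in-text proof. Your argument is the standard one those references supply --- the forward direction from $\langle\psi|A^\dagger A|\psi\rangle=\|A\psi\|^2\ge 0$, the converse from Proposition \ref{prop1} (or directly $A=B^{1/2}$ in finite dimensions) --- and it is complete as written.
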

The we can state the following
\begin{theorem}\label{thm4.1}
\; For a given $d$$\times$$d$ nonnegative trace-one Hermitian matrix,
with  $d=n m$, the reduced  matrices $\rho_{1,2}$, $\tilde \rho_{1,2}$, defined in Eqs. \eqn{rho1}, \eqn{rho2}, \eqn{trho1}, \eqn{trho2},
are trace-one Hermitean nonnegative matrices, i.e. they are quantum states.
\end{theorem}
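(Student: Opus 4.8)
The plan is to observe that each of $\rho_{1},\rho_{2},\tilde\rho_{1},\tilde\rho_{2}$ is a partial trace of $\rho$ over one of the two factors of a tensor decomposition $\C^{N}=\C^{n}\otimes\C^{m}$: $\rho_{1}=\Tr_{2}\rho$ and $\rho_{2}=\Tr_{1}\rho$ for the decomposition \eqn{rhotens1}, and $\tilde\rho_{1},\tilde\rho_{2}$ are the same with the roles of the $E$- and $F$-bases exchanged as in \eqn{rhotens2} (which in the qu-quart case produced $\rho_{1}=\tilde\rho_{2}$, $\rho_{2}=\tilde\rho_{1}$). Hence it suffices to prove the single statement: \emph{if $\sigma$ is a density operator on $\mathcal{H}_{1}\otimes\mathcal{H}_{2}$ then $\Tr_{2}\sigma$ is a density operator on $\mathcal{H}_{1}$}, and then apply it with $(\mathcal{H}_{1},\mathcal{H}_{2})$ equal to $(\C^{n},\C^{m})$ or $(\C^{m},\C^{n})$. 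In the notation of \eqn{rho1} the entries of $\rho_{1}$ are $(\rho_{1})_{jk}=\Tr R_{jk}=\sum_{p=1}^{m}\langle e_{j}\otimes f_{p}|\rho|e_{k}\otimes f_{p}\rangle$, so summing over $j=k$ gives $\Tr\rho_{1}=\Tr\rho=1$, and complex conjugation together with $\rho^{\dagger}=\rho$ gives $\overline{(\rho_{1})_{jk}}=(\rho_{1})_{kj}$; thus $\rho_{1}$ is Hermitian and of unit trace, and the identical one-line arguments handle $\rho_{2},\tilde\rho_{1},\tilde\rho_{2}$.

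The substantive point is nonnegativity, and the tool is the ``pull-through'' identity
$$
\Tr\bigl(\rho_{1}\,X\bigr)\;=\;\Tr\bigl(\rho\,(X\otimes I_{m})\bigr),
$$
valid for every $n\times n$ matrix $X$ ($I_{m}$ being the $m\times m$ identity), which follows at once from the block form $\rho=\sum_{j,k}E_{jk}\otimes R_{jk}$ of \eqn{rhotens1} together with $\Tr(E_{jk}X)=X_{kj}$ and $\Tr R_{jk}=\sum_{p}\langle e_{j}\otimes f_{p}|\rho|e_{k}\otimes f_{p}\rangle$. Given any $A\in\mathcal{B}(\C^{n})$, applying it with $X=A^{\dagger}A$ yields
$$
\rho_{1}\bigl(A^{\dagger}A\bigr)\;=\;\Tr\bigl(\rho_{1}A^{\dagger}A\bigr)\;=\;\Tr\Bigl(\rho\,(A\otimes I_{m})^{\dagger}(A\otimes I_{m})\Bigr)\;\ge\;0,
$$
the inequality holding because $\rho$ is a nonnegative operator and $(A\otimes I_{m})^{\dagger}(A\otimes I_{m})$ is a positive element of $\mathcal{B}(\C^{N})$, so Proposition~\ref{prop2} makes the right-hand side nonnegative. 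By Proposition~\ref{prop1} the positive elements of $\mathcal{B}(\C^{n})$ are exactly the operators $A^{\dagger}A$, so the display shows $\rho_{1}$ is nonnegative on all positive elements, whence $\rho_{1}\ge 0$ by Proposition~\ref{prop2} again. For $\rho_{2}=\Tr_{1}\rho$ one uses instead $\Tr(\rho_{2}Y)=\Tr(\rho\,(I_{n}\otimes Y))$ for $m\times m$ matrices $Y$ and the positive operators $(I_{n}\otimes B)^{\dagger}(I_{n}\otimes B)$, $B\in\mathcal{B}(\C^{m})$; the matrices $\tilde\rho_{1},\tilde\rho_{2}$ are treated identically after exchanging the two tensor factors.

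I do not anticipate a genuine obstacle: the whole argument reduces to the classical fact that partial traces preserve positivity, repackaged through Propositions~\ref{prop1} and \ref{prop2}. The one place that requires care is getting the index order right in $\Tr(E_{jk}X)=X_{kj}$ (not $X_{jk}$), since that is precisely what makes the pull-through identity — and hence the reduction of ``$\rho_{1}\ge 0$'' to ``$\rho\ge 0$'' — come out correctly; the same check must be repeated for the $I_{n}\otimes Y$ version and for the block decomposition \eqn{rhotens2} underlying $\tilde\rho_{1},\tilde\rho_{2}$.
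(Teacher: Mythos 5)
Your proposal is correct and follows essentially the same route as the paper's own proof: Hermiticity and unit trace read off directly from $\rho=\rho^\dagger$, $\Tr\rho=1$, and nonnegativity obtained by evaluating $\rho$ on the positive operators $A^\dagger A\otimes\mathbf{1}$ and $\mathbf{1}\otimes A^\dagger A$ and invoking Propositions~\ref{prop1} and \ref{prop2}, with $\tilde\rho_{1,2}$ handled by exchanging the tensor factors as in \eqn{rhotens2}. Your version merely spells out the ``pull-through'' identity $\Tr(\rho_1 X)=\Tr\bigl(\rho\,(X\otimes I_m)\bigr)$ and the index bookkeeping that the paper leaves implicit.
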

\begin{proof}
Hermiticity and trace-one property are an immediate consequence of $\rho= \rho^\dag, \Tr\rho=1$.


In order to prove nonnegativity of $\rho_1$ we advocate the two propositions quoted above.
Let us take $\rho$ in the form \eqn{rhotens1} and evaluate it over the positive operator $(A \otimes {\bf 1})^\dag (A\otimes {\bf 1})= A^\dag A \otimes {\bf 1}$. According to Prop. \ref{prop2} we have
$$
0\le \rho(A^\dag A \otimes {\bf 1})=\sum_{j,k} E_{jk}(A^\dag A) \Tr R_{jk}= \rho_1(A^\dag A)
$$
that is, according to Prop. \ref{prop1}, $\rho_1$ is nonnegative. Nonnegativity of $\tilde \rho_2$ can be proven in the same way, by representing $\rho$ in the form \eqn{rhotens2}.

Analogously,  to prove nonnegativity of $\rho_2$ we  take again $\rho$ in the  form \eqn{rhotens1} but evaluate it over the positive operator $({\bf 1} \otimes { A})^\dag ({\bf 1} \otimes { A})={\bf 1} \otimes  A^\dag A $. According to Prop. \ref{prop2} we have
$$
0\le \rho({\bf 1}\otimes A^\dag A)=\sum_{j,k} E_{jk}({\bf 1}) R_{jk} (A^\dag A)= \rho_2(A^\dag A)
$$
that is, according to Prop. \ref{prop1}, $\rho_2$ is nonnegative. Nonnegativity of $\tilde \rho_1$ can be proven in the same way, by representing $\rho$ in the form \eqn{rhotens2}.
\end{proof}
\subsection{Polynomial roots of probabilities}
As a direct consequence of Theorem \ref{thm4.1}  we may  derive new interesting inequalities.
To this, let us consider the characteristic polynomial  in $\lambda$, associated to $\rho$,  $d\times d$, Hermitean, positive, trace-one matrix,  which may be written as
$$
\det\left(\rho-\lambda{\bf 1}\right)=\sum_{k=1}^d c_k\lambda^k=\prod_{k=1}^d(\lambda-\lambda_k),
$$
where $\lambda_k\geq 0 , k=1,...,d$ are the eigenvalues of $\rho$,
and $\sum\lambda_k=1$. Then, the solution of the eigenvalues   equation
\begin{equation}\label{6}
\det\left(\rho-\lambda{\bf 1}\right)=0
\end{equation}
yields a probability vector $(\lambda_1,...,\lambda_d)$. By virtrue of Theorem \ref{thm4.1}, the following Corollary holds
\begin{corollary} Let $d=nm$, $\rho$ a qudit and
 $\rho_1$, $\rho_2$ respectively n-and m-dimensional states defined in   Eq. \eqn{rho1}, \eqn{rho2}. Let us consider the  associated characteristic polynomials
\begin{eqnarray}
\det\,\big(\rho_1-\lambda {\bf 1}_{n\times
n}\big)=\prod_{s_1=1}^n(\lambda-\Lambda_{s_1}),\label{7}\\
\det\,\big(\rho_2-\lambda {\bf 1}_{m\times
m}\big)=\prod_{s_2=1}^m(\lambda-\bar\Lambda_{s_2}).\label{8}
\end{eqnarray}
Then,
\be
 0\leq\Lambda_{s_1}\leq 1\;\;\;  0\leq\bar\Lambda_{s_2}\leq 1.
 \ee
Moreover, the map  which associates to  the probability distribution
$\lambda_1,\lambda_2,\ldots,\lambda_d$  the probability
distributions $ \Lambda_1,\Lambda_2,\ldots,\Lambda_n$ and
$\bar\Lambda_1,\bar\Lambda_2,\ldots,\bar\Lambda_m$, is bijective.
\end{corollary}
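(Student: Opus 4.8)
The plan is to split the statement into the eigenvalue bounds and the bijectivity claim, reducing both to Theorem~\ref{thm4.1} and to elementary facts about characteristic polynomials. First I would prove the bounds. By Theorem~\ref{thm4.1} the reduced matrices $\rho_1,\rho_2$ of Eqs.~\eqn{rho1}, \eqn{rho2} are again density states: Hermitian, nonnegative and of trace one. Hermiticity makes all roots of the polynomials in \eqn{7}, \eqn{8} real, nonnegativity forces $\Lambda_{s_1}\ge 0$ and $\bar\Lambda_{s_2}\ge 0$, and the trace-one property gives $\sum_{s_1}\Lambda_{s_1}=\mbox{Tr}\,\rho_1=1$ and $\sum_{s_2}\bar\Lambda_{s_2}=\mbox{Tr}\,\rho_2=1$. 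A nonnegative summand of a collection summing to one cannot exceed one, so $0\le\Lambda_{s_1}\le 1$ and $0\le\bar\Lambda_{s_2}\le 1$; the same step shows that $(\Lambda_1,\dots,\Lambda_n)$ and $(\bar\Lambda_1,\dots,\bar\Lambda_m)$ are genuine probability vectors, on the same footing as the spectrum $(\lambda_1,\dots,\lambda_N)$ of $\rho$ discussed before Eq.~\eqn{6}.

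For the bijectivity I would phrase the correspondence at the level of the characteristic polynomials \eqn{7}, \eqn{8}. The reduction $\rho\mapsto(\rho_1,\rho_2)$ is well defined by Theorem~\ref{thm4.1}, and on passing to characteristic polynomials it carries the spectral data of $\rho$ to the spectral data of the two reductions, which is precisely the assignment $(\lambda_1,\dots,\lambda_N)\mapsto\big((\Lambda_1,\dots,\Lambda_n),(\bar\Lambda_1,\dots,\bar\Lambda_m)\big)$ in the statement. The bijection itself I would establish through the dictionary between a monic polynomial and its multiset of roots: by the fundamental theorem of algebra a monic polynomial is determined by, and determines, its roots with multiplicity, while Vieta's formulas recover the coefficients as the elementary symmetric functions of the roots. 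Applying this to each of $\rho,\rho_1,\rho_2$, and invoking the bounds and normalizations of the first paragraph, identifies the set of admissible characteristic polynomials with the appropriate probability simplex, so that the passage to roots is one-to-one and onto on each factor.

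The step I expect to be the real work is the surjectivity, i.e.\ realizing the correspondence as onto rather than merely well defined. For this I would exhibit, for any prescribed admissible data, a density matrix reproducing it: a diagonal state $\mbox{diag}(\Lambda_1,\dots,\Lambda_n)$ realizes a given $n$-dimensional spectrum, and a product model such as $\mbox{diag}(\Lambda_1,\dots,\Lambda_n)\otimes(1/m){\bf 1}_{m}$ shows that this spectrum is indeed attained by the reduction $\rho_1$ of a bona fide qu-Nit, and symmetrically for $\rho_2$. Injectivity on each factor is then the uniqueness of monic factorization, and the inverse is given explicitly by the coefficient-to-root direction of Vieta together with the fundamental theorem of algebra. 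Assembling the two factors, together with the bounds $\Lambda_{s_1},\bar\Lambda_{s_2}\in[0,1]$ and the normalizations $\sum_{s_1}\Lambda_{s_1}=\sum_{s_2}\bar\Lambda_{s_2}=1$ just established, confines the roots to the relevant simplices and yields the asserted bijection of probability vectors.
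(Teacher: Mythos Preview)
Your argument for the eigenvalue bounds is fine and is exactly what the paper intends: once Theorem~\ref{thm4.1} gives that $\rho_1,\rho_2$ are density states, the inequalities $0\le\Lambda_{s_1},\bar\Lambda_{s_2}\le 1$ are immediate from Hermiticity, nonnegativity and unit trace. The paper offers no further argument than this.

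The bijectivity part, however, has a genuine gap. You slide between two different maps. The reduction $\rho\mapsto(\rho_1,\rho_2)$ is well defined, but the map the Corollary speaks of is at the level of \emph{spectra}, $(\lambda_1,\dots,\lambda_N)\mapsto\big((\Lambda_1,\dots,\Lambda_n),(\bar\Lambda_1,\dots,\bar\Lambda_m)\big)$, and that assignment is not even well defined: unitarily equivalent $\rho$'s (same spectrum) can have reductions with different spectra. With $n=m=2$, both $\rho=\mathrm{diag}(1,0,0,0)$ and the Bell projector $|\psi\rangle\langle\psi|$, $|\psi\rangle=\tfrac{1}{\sqrt2}(|e_1f_1\rangle+|e_2f_2\rangle)$, have spectrum $(1,0,0,0)$, yet their reductions have spectra $(1,0)$ and $(\tfrac12,\tfrac12)$ respectively. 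Even setting well-definedness aside, a parameter count excludes bijectivity for $n,m>1$: the source simplex has dimension $nm-1$, the target has dimension $(n-1)+(m-1)$, and $nm-1>n+m-2$ whenever $(n-1)(m-1)>0$.

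Your Vieta/fundamental-theorem passage only establishes the trivial bijection between a monic polynomial and its multiset of roots, separately for $\rho$, $\rho_1$, $\rho_2$; it does not link the root data of $\rho$ to that of its reductions. The product model $\mathrm{diag}(\Lambda)\otimes\mathrm{diag}(\bar\Lambda)$ does hit a prescribed pair $(\Lambda,\bar\Lambda)$, but that same construction kills injectivity: any other qu-Nit with those marginals --- and there are continuously many --- is another preimage. In short, the bijectivity clause as literally stated cannot be proved; the paper itself gives no argument for it, and your strategy does not repair it.
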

By explicitly computing the LHS of Eqs. \eqn{7},\eqn{8} one can derive bounds on the determinant of the states $\rho_1$, $\rho_2$.  Similar inequalities can be obtained starting with  $\tilde\rho_1, \tilde\rho_2$, defined in Eqs. \eqn{trho1}, \eqn{trho2}.

As an
example, let us consider the case  of $d=4$. With a straightforward calculation we find
\begin{eqnarray}
\Lambda_{1,2}=\frac12\left[1\pm\sqrt{1-4\det\,\rho_1}\right] \Longrightarrow \quad 0\leq\det\,\rho_1\leq 1/4,\nonumber\\
\label{9}\\
\bar\Lambda_{1,2}=\frac12\left[1\pm\sqrt{1-4\det\,\rho_2}\right]\Longrightarrow\quad 0\leq\det\,\rho_2\leq
1/4.\nonumber
\end{eqnarray}
The inequality for the  determinant of the state $\rho_1$ can be easily checked to be true for the general case   $d=2\,m$ ($\rho_1$ being two-dimensional again). We have then
\begin{equation}\label{10}
0\leq\left(\mbox{Tr}\,R_{11}\right)\left(\mbox{Tr}\,R_{22}\right)-
\left(\mbox{Tr}\,R_{12}\right)\left(\mbox{Tr}\,R_{21}\right)\leq 1/4,
\end{equation}
or
\begin{equation}\label{11}
  \left(\mbox{Tr}\,R_{11}\right)\left(\mbox{Tr}\,R_{22}\right)\geq
\left(\mbox{Tr}\,R_{12}\right)\left(\mbox{Tr}\,R_{21}\right),\quad
\left(\mbox{Tr}\,R_{12}\right)\left(\mbox{Tr}\,R_{21}\right)+1/4\geq
\left(\mbox{Tr}\,R_{11}\right)\left(\mbox{Tr}\,R_{22}\right).
\end{equation}

The new inequalities are susceptible to be checked experimentally for  density
matrices obtained within the framework of quantum tomography
approach.

\section{New information-entropic inequalities for nonnegative
trace-one Hermitian matrices}\label{infoentro}

As an application of the results of previous sections, we derive in this  section new information-entropic inequalities for density states. By using  Eqs. \eqn{pr1}-\eqn{pr3}, we can express the dichotomic probabilities $(p_a^{(jk)}, 1- p_a^{(jk)}), a= 1,2,3$ in terms of the matrix elements of  $\rho$. Upon substituting them in expressions like von Neumann or Tsallis relative entropy we get the desired inequalities as follows.

For dichotomic variables relative von Neumann entropy reads
\be
S_{vN}= -[p\ln p + (1-p) \ln(1-p)]\ge 0
\ee
so that
\be\label{inevN}
-[p_a^{(jk)}\ln\frac{p_a^{(jk)}}{1-p_a^{(jk)}}+ \ln (1- p_a^{(jk)}) ]\ge 0
\ee
Analogously, for Tsallis relative entropy we have
\be\label{tsallis}
S_{Ts}= (1-q)^{-1}\{(p_a^{(jk)})^q(p_b^{(jk)})^{1-q}+ (1-p_a^{(jk)})^q(1-p_b^{(jk)})^{1-q}-1\}\ge 0.
\ee
In the particular qubit case, we can drop the $(jk)$ index, and get
\be\label{2pr}
\begin{array}{ccccccc}
p_1&=&\frac{1}{2}  + {\rm Re} \rho_{12} &\;\;\; & 1- p_1&=&\frac{1}{2} - {\rm Re} \rho_{12}\\
p_2&=&\frac{1}{2}  - {\rm Im} \rho_{12}&\;\;\; & 1- p_2&=&\frac{1}{2} + {\rm Im} \rho_{12}\\
p_3&=&\rho_{11} &\;\;\; & 1- p_3&=&1-\rho_{11}
\end{array}
\ee
and Eq. \eqn{inevN} becomes
\beqa
&& \ln \sqrt{\frac{1}{4}-({\rm Re}\, \rho_{12})^2} + {\rm Re}\, \rho_{12}\ln \frac{\frac{1}{2} +{\rm Re}\, \rho_{12}}{\frac{1}{2} -{\rm Re}\, \rho_{12}}\le 0 \label{inevN1}\\
&&\ln \sqrt{\frac{1}{4}-({\rm Im}\, \rho_{12})^2} + {\rm Im}\, \rho_{12}\ln \frac{\frac{1}{2} +{\rm Re}\, \rho_{12}}{\frac{1}{2} -{\rm Re}\, \rho_{12}}\le 0\label{inevN2}
\eeqa
Assuming $a=1, b=2$ Eq. \eqn{tsallis} for the Tsallis relative entropy becomes in turn,
\be
(1-q)^{-1}\{(\frac{1}{2}  + {\rm Re} \rho_{12} )^q (\frac{1}{2}  - {\rm Im} \rho_{12} )^{1-q}+ (\frac{1}{2}  - {\rm Re} \rho_{12} )^q(\frac{1}{2}  + {\rm Im} \rho_{12} )^{1-q}-1\}\ge 0. \label{inets}
\ee
Finally, in the qudit case, with $N= 2n$, we can apply our findings to the $2\times 2$ state
\be
\rho_1= \left( \begin{array}{cc} \Tr R_{11} & \Tr R_{12}\\
\Tr R_{21}& \Tr R_{22}
\end{array}\right),
\ee
and obtain, from  inequalities \eqn{inevN1},\eqn{inevN2}, \eqn{inets}, new inequalities by substituting $\rho_{12}$ with $\Tr R_{12}$.
%
%



\section{Conclusions}\label{concl}

To conclude, we point out the main results of our study.

We proved that a $d$-dimensional density state,  $\rho$, has matrix elements which can be
parameterized in terms of  dichotomic
probability distributions and we discussed the dependence of such a representation on the chosen basis of rank-one projectors. The expression of matrix elements
$\rho_{jk}$ of the qudit in terms of dichotomic probabilities
is the argument of  Theorem \ref{theodicho}. The probabilities
$ p_a^{(jk)}$  satisfy the Silvester criterion
of nonnegativity of the density matrix $\rho$.
These rigorously proven expressions for the density matrix of qudit states in
terms of dichotomic probabilities $p_a^{(jk)}$ are the main result of
this study.

 It is worth noting that a possibility to reconstruct the matrix
elements of the density operator in discrete basis was suggested
in~\cite{PRL1995} without obtaining the dichotomic probability representation
of the density matrix; it was related to  experiments where photon-number
distributions were measured to determine the density matrices of photon
states.

Upon elaborating on previous
claims~\cite{Chernega-JRLR,Chernega-JPCS,RitaEntropy,PS-Milestones,Julio-Entropy,MA-Entropy}
we proved that it is possible to define reduced matrices from the
original qudit, where $d=nm$, and obtain smaller $n$$\times$$n$-
and $m$$\times$$m$-dimensional matrices, which keep the properties
of the initial matrix $\rho=\rho^\dagger$, Tr\,$\rho=1$, and
$\rho\geq 0$ of being states. The theorem can be extended
iteratively to matrices with $d=n_1n_2\cdots n_m$.

We obtained  new relations for the determinants
and eigenvalues of reduced states. We   derived new inequalities, including entropic inequalities for
the matrix elements of the qudit, which provide new
relations for its matrix elements. These inequalities can be employed to
control the accuracy of experiments where  density matrix elements are
reconstructed, in particular by using tomographic methods.


The description of quantum states in terms of dichotomic probabilities amounts to decompose a point in the equilateral triangle by using an orthogonal decomposition with respect to the edges instead of using the vertices.This decomposition may be used to describe the evolution on the space of quantum systems in terms of evolution in their dichotomic probability distributions. For example, for systems coupled to an
environment (open systems) the Markovian or non-Markovian evolution of qudit states studied
in \cite{Chrus} could be mapped onto the time evolution of the associated dichotomic probabilities. It is conceivable that one has to  consider higher order ordinary differential equations in order to describe the first order differential equations  GKLS on the space of quantum states. We
plan to perform such an analysis in a forthcoming publication.

\bigskip

\noindent{\bf Acknowledgements}

\noindent
G.M. would like to thank the support provided by the Santander/UC3M Excellence Chair Programme 2019/2020; he also  acknowledges financial support from the Spanish Ministry of Economy and Competitiveness, through the Severo Ochoa Programme for Centres of Excellence in RD (SEV-2015/0554). G.M. is a member of the Gruppo Nazionale di Fisica Matematica (INDAM), Italy.
P.V.  acknowledges  support by COST (European Cooperation in Science  and  Technology)  in  the  framework  of  COST  Action  MP1405  QSPACE.


\end{document}